\newtheorem{theorem}{Theorem}
\newtheorem{lemma}{Lemma}
\newtheorem{corollary}{Corollary}
\newcommand{\arxiv}[1]{arXiv:\href{http://arxiv.org/abs/#1}{#1}}
\newcommand{\itemref}[1]{\hyperref[#1]{(\ref*{#1})}}
\DeclareMathOperator{\ctw}{\mathbf{ctw}}
\DeclareMathOperator{\pw}{\mathbf{pw}}
\newcommand{\N}{\mathbb{N}}
\newcommand{\intv}[2]{\left \{#1,\dots, #2 \right \}}
\title[Hitting minors, subdivisions, and immersions in tournaments]{Hitting minors, subdivisions, and immersions in tournaments\thanks{This work has been done while the author was affiliated to LIRMM, Montpellier, France and Institute of Computer Science, University of Warsaw, Poland and was supported by the grant PRELUDIUM 2013/11/N/ST6/02706 of the Polish National Science Center (NCN).}}
\author{Jean-Florent Raymond}
\affiliation{Technische Universit\"at Berlin, Germany}
\keywords{directed Erd\H{o}s--P\'osa property, packing and covering, topological minors, immersions, tournaments.}
\begin{document}

\maketitle

\begin{abstract}
\noindent The Erdős--Pósa property relates parameters of covering and
packing of combinatorial structures and has been mostly studied in the
setting of undirected graphs. In this note, we use results of
Chudnovsky, Fradkin, Kim, and Seymour to show that, for every directed
graph $H$ (resp.\ strongly-connected directed graph $H$), the class of
directed graphs that contain $H$ as a strong minor (resp.\ butterfly minor,
topological minor)
has the vertex-Erdős--Pósa property in the class of tournaments. We
also prove that if $H$ is a strongly-connected directed graph, the
class of directed graphs containing $H$ as an immersion has the edge-Erdős--Pósa property in
the class of tournaments.
\end{abstract}

\section{Introduction}

We are concerned in this note with the Erdős--Pósa property in the setting of directed
graphs. This property, which has
mostly been studied on undirected graphs, is originated from the
following classic result by \cite{erdHos1965independent}:
there is a function $f \colon \N \to \N$,
such that, for every (undirected) graph $G$ and every positive integer $k$, one of
the following holds:
\begin{inparaenum}[(a)]
\item $G$ contains $k$ vertex-disjoint cycles; or
\item there is a set $X \subseteq V(G)$ with $|X| \leq f(k)$ and such that $G \setminus X$
  has no cycle.
\end{inparaenum}
This theorem expresses a duality between a parameter of \textit{packing}, the
maximum number of vertex-disjoint cycles in a graph, and a parameter
of \textit{covering}, the minimum number of vertices that intersect
all cycles.
This initiated a research line aimed at providing
conditions for this property to hold, for various combinatorial objects.
%
Formally, we say that a class of graphs $\mathcal{H}$ has the \emph{Erdős--Pósa
property} 
if there is a function $f \colon
\N \mapsto \N$ such that, for every positive integer $k$
and every graph $G$ (referred to as the \emph{host graph}) 
 one of the following holds:
\begin{itemize}
\item $G$ has $k$ vertex-disjoint subgraphs that are isomorphic to
  members of $\mathcal{H}$; or
\item there is a set $X \subseteq V(G)$ with $|X| \leq f(k)$ and such
  that $G \setminus X$ has no subgraph isomorphic to a member of~$\mathcal{H}$.
\end{itemize}
The Erdős--Pósa Theorem states that the class of cycles has this
property. 
One of the most general extensions of the Erdős--Pósa Theorem is
certainly the following byproduct of the Graph Minors series:
\cite{RobertsonS86GMV} proved that the class of graphs that contain $H$
as a minor have the Erdős--Pósa property iff $H$ is planar.

On the other hand, some classes like odd
cycles fail to have the Erdős--Pósa property, as proved by \cite{dejter1988unboundedness}.
When this happens, one can consider
particular classes of host graphs. In this direction,
\cite{reed1999mangoes} proved that odd cycles have the
Erdős--Pósa property in planar graphs.

A natural variant of the Erdős--Pósa property is to change, in the definition,
vertex-disjoint subgraphs for edge-disjoint ones and sets of vertices for
sets of edges. It has been proved that the Erdős--Pósa Theorem also holds
in this setting (see \cite[Exercise 5 of Section~9]{Diestel05grap}). Other results have been obtained about
this variant, less than on the vertex variant, though.
At this point we have to stress that, if the vertex and edge variants
of the Erdős--Pósa property have close definitions, one cannot in
general deduce one from the other. We refer the reader to the surveys of
\cite{Reed97tree} and \cite{Raymond2016recent} for more details about the
Erdős--Pósa property.

In the setting of directed graphs however, few results are
known. Until recently, the largest class of directed graphs that has been studied under
the prism of the Erdős--Pósa property was the class of directed cycles, see
\cite{ReedRST96pack, Reed1995gallai, Guenin2010packing,
  Seymour1996Packing, havet:hal-00816135}.
It is worth noting that,
besides its combinatorial interest, the Erdős--Pósa property in
directed graphs found applications in bioinformatics and in the study of
Boolean networks~\cite{Aracena2016RSnumber, Aracena1263580}.
We consider here finite directed graphs (\emph{digraphs}) that may have
multiple arcs, but not loops and we respectively denote by $V(G)$ and $E(G)$ the
set of vertices and the multiset of arcs of a digraph~$G$. A digraph
$G$ is said to be
\emph{strongly-connected} if it has at least one vertex and, for every $u,v
\in V(G)$, there is a directed path from $u$ to~$v$. In particular the
digraph with one vertex is strongly-connected. The most general
result about the Erdős--Pósa property in digraphs is certainly
the following directed counterpart of the aforementioned results of
Robertson and Seymour.

\begin{theorem}[\cite{AkhoondianKKW09thee}]\label{kreutzer}
  Let $H$ be a strongly-connected digraph that is a
  butterfly minor (resp.\ topological minor) of a cylindrical
  grid.\footnote{The notions of butterfly minor and topological minor
    will be defined in a forthcoming paragraph. We refer the reader to \cite{AkhoondianKKW09thee}
  for a definition of the cylindrical directed grid.} There is a function $f \colon \N \to \N$,
  such that for every digraph $G$ and every positive integer $k$, one of
  the following holds:
  \begin{itemize}
  \item $G$ has $k$ vertex-disjoint subdigraphs, each having $H$ as
    a butterfly minor (resp.\ topological minor); or
  \item there is a set $X \subseteq V(G)$ with $|X| \leq f(k)$ such that $G \setminus X$
    does not have $H$ as a butterfly minor (resp.\ topological minor).
\end{itemize}
\end{theorem}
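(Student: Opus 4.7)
The plan is to follow the blueprint of the undirected minor Erdős–Pósa theorem of \cite{RobertsonS86GMV} and transpose it to the directed setting using the Directed Grid Theorem of Kawarabayashi and Kreutzer. Fix a strongly-connected digraph $H$ and let $s$ be a positive integer such that $H$ is a butterfly (respectively topological) minor of the cylindrical grid of order $s$; such an $s$ exists by the hypothesis of the theorem. The proof rests on a dichotomy controlled by the directed tree-width of the host digraph $G$: either this width is large, in which case one extracts many disjoint copies of $H$, or it is small, in which case one extracts a small hitting set from a directed tree decomposition.

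For the high-tree-width branch, I invoke the Directed Grid Theorem: there is a function $g \colon \N \to \N$ such that every digraph of directed tree-width at least $g(t)$ contains the cylindrical grid of order $t$ as a butterfly minor. Applying this with $t = s \cdot k$, if the directed tree-width of $G$ is at least $g(sk)$ then $G$ contains the cylindrical grid of order $sk$. Inside this grid one can carve out $k$ pairwise vertex-disjoint copies of the cylindrical grid of order $s$, for instance by partitioning the $sk$ concentric directed cycles into $k$ consecutive blocks of size $s$. Each block provides a subdigraph of $G$ containing $H$ as a butterfly (respectively topological) minor, and the $k$ disjoint witnesses required by the first alternative are obtained.

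For the low-tree-width branch, $G$ admits a directed tree decomposition $(T,\beta)$ of width below $g(sk)$, and the task is to exhibit a hitting set $X$ with $|X| \leq f(k)$. I would proceed recursively on $T$: find a node whose bag, together with the associated guard sets, separates $G$ into subdigraphs each of which contains strictly fewer disjoint copies of $H$ than the original; add this bounded-size separator to $X$ and recurse on each side with the parameter $k$ decreased. As each separator has size bounded by a function of $g(sk)$, and the recursion depth is controlled by $k$, the resulting set $X$ has size bounded by a function of $k$ alone.

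The main obstacle, in my view, lies in the low-tree-width case. Directed tree-width is considerably more delicate than its undirected counterpart: bags together with guard sets separate the digraph only with respect to its strongly-connected structure, and a careful argument is needed to ensure that a strongly-connected copy of $H$ cannot straddle the separator in a way that escapes the recursion. A secondary obstacle is the topological-minor variant of the high-tree-width step: the Directed Grid Theorem directly produces only a butterfly-minor copy of the grid, so lifting a \emph{topological}-minor witness of $H$ through this containment requires an extra argument (for instance, passing through a sufficiently subdivided grid, or relying on a topological-minor strengthening of the grid theorem), since butterfly-minor containment does not in general preserve topological containment of a fixed pattern.
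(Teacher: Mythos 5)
Note first that the paper you are reading does \emph{not} prove \autoref{kreutzer}: it imports it verbatim from Amiri, Kawarabayashi, Kreutzer, and Wollan and uses it only as background to motivate restricting the host class to (generalizations of) tournaments, where the planarity-type hypothesis on $H$ can be dropped. There is therefore no in-paper proof to compare your sketch against; the relevant proof is the one in the cited reference.

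Your outline does track the strategy of that reference: a dichotomy on directed tree-width, with the Directed Grid Theorem supplying the packing side and a separator argument on a directed tree decomposition supplying the covering side. But the low-width branch, which you describe only at the level of intention, is exactly where the cited proof does its real work, and it is a genuine gap in your proposal. The difficulty you name is the right one: the bag of a node in a directed tree decomposition, together with its guard sets, does not separate $G$ in the strong sense needed --- a strongly-connected model of $H$ can thread through guard sets without being killed --- and making the recursion close with a bound depending on $k$ alone requires substantial further machinery (well-linked sets and bramble-type arguments in the cited paper), not just ``add the separator to $X$ and recurse.'' The secondary issue you raise about topological minors is also real: the Directed Grid Theorem yields a butterfly-minor grid, and one must either use a topological strengthening of it or exploit the bounded degree of the cylindrical grid to downgrade a butterfly-minor model of a larger grid to a topological-minor model of a smaller one. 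So your proposal is a faithful high-level roadmap of the known proof, but the two steps you flag as obstacles are deferred rather than resolved, and the first of them is the heart of the argument.
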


On the other hand, \cite{AkhoondianKKW09thee} proved
that \autoref{kreutzer} does not hold for the strongly-connected
digraphs $H$ that do not satisfy the conditions of its statement. It seems therefore natural to ask under what restrictions
on the host digraphs the above result could be true for every
strongly-connected digraph, in the same spirit as the aforementioned
result of Reed.

The purpose of this note is twofold: obtaining new Erdős--Pósa type
results on directed graphs and providing evidence that techniques
analogues to those used in the undirected case may be adapted to the
directed setting. In particular, we describe conditions on the class
of host digraphs so that \autoref{kreutzer} holds for every
strongly-connected digraph~$H$.
Before we formally state our results, let us introduce some terminology.

Several directed counterparts of the notion of minor have been
introduced in the literature. An arc $(u,v)$ of a digraph is said to be
\emph{contractible} if either it is the only arc with head $v$, or it
is the only arc with tail $u$.
Following \cite{Johnson2001138} and \cite{Kim2015138}, we say that a
digraph $H$ is a \emph{butterfly minor} (resp.\ \emph{strong minor})
of a digraph $G$ if a digraph isomorphic to $H$ can be obtained from a
subdigraph of $G$ by contracting contractible arcs (resp.\
contracting strongly-connected subdigraphs to single vertices).
Notice that these notions are
incomparable. A motivation for these definitions is that taking
(butterfly  or strong) minors does not create directed cycles.
Unlike minors, immersions and topological minors are concepts
that are easily extended to the setting of directed graphs as they can be
defined in terms of paths.
We say that a digraph $H$ is a \emph{topological minor} of a digraph $G$ if there is a
subdigraph of $G$ that can be obtained from a digraph isomorphic to
$H$ by replacing arcs by directed paths (in the same direction) that do not share internal vertices. If we allow
these paths to share internal vertices but not arcs, then we say that
$H$ is an \emph{immersion} of~$G$. Observe that every topological minor is a butterfly minor.
However, as often with the Erdős--Pósa property, this does not allow
us in general to deduce an Erdős--Pósa-type result about the one
relation from a result about the other one.

Our results hold on superclasses of the extensively studied class of
\emph{tournaments}, that are all orientations of undirected complete
graphs.
For $s \in \N$, a $n$-vertex digraph is \emph{$s$-semicomplete} if
every vertex $v$ has at least $n-s$ (in- and out-) neighbors. A \emph{semicomplete} digraph is a 0-semicomplete
digraph. Note that a semicomplete digraph is not necessarily a tournament as it may
have multiple egdes between a pair of vertices. These classes generalize the class of tournaments.
Our contributions are the following two theorems.

\begin{theorem}\label{main:pw}
  For every digraph (resp.\ strongly-connected digraph) $H$ and every $s\in
  \N$, there is a function $f \colon \N \to \N$ such that for every
  $s$-semicomplete digraph $G$ and every positive integer $k$, one of
  the following holds:
  \begin{itemize}
  \item $G$ has $k$ vertex-disjoint subdigraphs, each containing $H$ as a strong minor (resp.\
    butterfly minor, topological minor); or\label{e:mainpw:first}
  \item there is a set $X \subseteq V(G)$ with $|X|\leq f(k)$ such
    that $G \setminus X$ does not contain $H$ as a
    strong minor (resp.\ butterfly minor, topological minor).
  \end{itemize}
\end{theorem}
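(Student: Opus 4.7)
The plan is to reduce the statement to a pathwidth-based sweeping argument, using the structural results of Fradkin--Seymour, Chudnovsky--Fradkin--Seymour, and Kim--Seymour. For each of the three minor variants these supply a function $p_0 = p_0(|V(H)|,s)$ such that every $s$-semicomplete digraph not containing $H$ as the relevant minor has pathwidth at most $p_0$. Also, $s$-semicompleteness is preserved under vertex deletion, so it persists throughout the induction.

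The core ingredient is an auxiliary lemma, proved by induction on $k$: for every digraph $G$ with $\pw(G)\leq w$, either $G$ has $k$ vertex-disjoint subdigraphs each containing $H$ as the relevant minor, or there is $X\subseteq V(G)$ with $|X|\leq (w+1)(k-1)$ whose removal destroys every such subdigraph. In the induction step I would sweep along a path decomposition $(B_1,\dots,B_m)$ and take the smallest index $j$ for which the prefix $G[B_1\cup\cdots\cup B_j]$ contains $H$ as the relevant minor; by minimality of $j$, the corresponding witness uses at least one vertex of $B_j$, so it is vertex-disjoint from $G-B_j$. Applying the induction hypothesis to $G-B_j$ with parameter $k-1$ then produces either $k-1$ further disjoint copies (giving $k$ in $G$) or a hitting set of size $(w+1)(k-2)$ to which one adds $B_j$.

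To deduce the theorem, I would argue that any $s$-semicomplete digraph $G$ with fewer than $k$ vertex-disjoint $H$-models has $\pw(G)$ bounded in terms of $k$, $|V(H)|$, and $s$. Take a maximum packing $\mathcal{P}$ of size at most $k-1$; the digraph $G-V(\mathcal{P})$ is $s$-semicomplete with no $H$-model, so the CFKS-type bound yields $\pw(G-V(\mathcal{P}))\leq p_0$, and since deleting a vertex set decreases pathwidth by at most its cardinality, $\pw(G)\leq p_0+|V(\mathcal{P})|$. The auxiliary lemma applied to $G$ with this pathwidth bound then furnishes the desired hitting set.

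The main obstacle is controlling $|V(\mathcal{P})|$. While $|\mathcal{P}|\leq k-1$, a single $H$-model may a priori use many vertices --- for topological minors, for instance, internal vertices on long directed paths inflate the model size. The argument must therefore show that one can always choose the members of $\mathcal{P}$ to have size bounded by a function of $|V(H)|$ and $s$, exploiting the density of $s$-semicomplete digraphs to shortcut long paths without breaking vertex-disjointness. Once this is in hand, the three pieces combine to yield a function $f$ with polynomial dependence on $k$ for fixed $H$ and $s$.
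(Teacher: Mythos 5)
Your overall plan --- bound the pathwidth and then cover via a path-decomposition sweep --- is close in spirit to the paper's, but two steps have genuine gaps. First, your pathwidth bound is left unfinished: you delete a maximum packing $\mathcal{P}$, apply the exclusion theorem to $G - V(\mathcal{P})$, and bound $\pw(G) \leq p_0 + |V(\mathcal{P})|$, but as you acknowledge, controlling $|V(\mathcal{P})|$ (especially for topological minors, where a single model can contain long subdivided paths) is not resolved. The paper sidesteps this entirely: if $G$ has no $k$ vertex-disjoint subdigraphs each containing $H$ as the relevant minor, then $G$ does not contain the disjoint union $k\cdot H$ as that minor, so \autoref{c:zeta} applied to $k\cdot H$ gives $\pw(G) < \zeta_{s,k\cdot H}$ directly, with no bound on individual witness sizes required.

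Second, your sweeping auxiliary lemma tacitly assumes the minimal $H$-witnesses are strongly connected. As written, the claim that ``the corresponding witness \ldots\ is vertex-disjoint from $G-B_j$'' is false: the witness may use vertices of $B_1 \cup \cdots \cup B_{j-1}$ that lie outside $B_j$, and those survive the deletion. What you actually need is that every witness inside $G - B_j$ lives entirely in bags $B_{j+1},\dots,B_m$; this follows from the minimality of $j$ only once you establish that strongly-connected subdigraphs occupy a consecutive interval of bags (which in turn uses property (ii) of directed path-decompositions, forbidding ``forward'' arcs). That interval property fails when a minimal witness has several strongly-connected components --- which is exactly the case for strong minors of an $H$ that is not itself strongly connected, a case the theorem statement must cover. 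The paper handles the general $p$-component case with Alon's piercing lemma for $p$-interval systems (\autoref{l:alonpiercing}, used inside \autoref{boundtw-minor}). Your sweep, once corrected with the interval argument, is fine and gives a slightly better constant in the $p=1$ case (butterfly and topological minors, or strong minors of a strongly-connected $H$), but it does not cover the full scope of \autoref{main:pw}.
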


\begin{theorem}\label{main:ctw}
  For every strongly-connected digraph $H$ on at
   least two vertices, there is a function $f \colon \N \to \N$ such
   that for every semicomplete digraph $G$ and every positive integer
   $k$, one of the following holds:
  \begin{itemize}
  \item $G$ has $k$ arc-disjoint subdigraphs, each containing $H$ as an immersion; or\label{e:maincw:first}
  \item there is a set $X \subseteq E(G)$ with $|X|\leq f(k)$ such
    that $G \setminus X$ does not contain $H$ as an immersion.
  \end{itemize}
\end{theorem}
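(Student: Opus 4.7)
The plan is to leverage a structural theorem due to Chudnovsky, Fradkin, and Seymour (CFS), asserting the existence of a function $g : \N \to \N$ such that any semicomplete digraph with $\ctw \geq g(h)$ contains every strongly-connected digraph on $h$ vertices as an immersion. I will combine this with an induction on $k$, dichotomizing on the cutwidth of $G$.

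In the \emph{packing regime}, where $\ctw(G) \geq g(h) + (k-1)\sigma(H)$ for a suitable constant $\sigma(H)$, I plan to extract $k$ arc-disjoint immersions iteratively: apply CFS to find an immersion $F_1$, set $G_1 = G \setminus E(F_1)$, and repeat. Since deleting $m$ arcs decreases cutwidth by at most $m$, this strategy succeeds provided each extracted immersion can be chosen to use at most $\sigma(H)$ arcs. I expect to establish such a bound by choosing the vertex image $\phi : V(H) \to V(G)$ and the realizing paths carefully, using the semicomplete structure to ensure that each of the $|E(H)|$ paths has bounded length.

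In the \emph{covering regime}, where $\ctw(G) < g(h) + (k-1)\sigma(H)$, I fix an ordering of $V(G)$ achieving this cutwidth. Since $H$ is strongly connected, any $H$-immersion that has images on both sides of some cut of the ordering must use a backward arc of that cut; hence after removing the at most $\ctw(G)$ backward arcs of a well-chosen cut, every remaining immersion lies entirely on one side. Recursing on the two sides and balancing against the packing bound should yield an $f(k)$ independent of $|V(G)|$.

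The principal obstacle is two-fold. First, I must establish the uniform bound $\sigma(H)$, namely that CFS can deliver an immersion whose arc count is bounded by a function of $H$ alone; this requires a careful choice of immersion model exploiting the semicompleteness. Second, the naive covering recursion incurs a factor of $\log |V(G)|$; eliminating this dependence requires interleaving the two regimes, essentially arguing that if the covering recursion runs too deep, then the accumulated backward arcs already contain $k$ arc-disjoint immersions, placing us back in the packing case.
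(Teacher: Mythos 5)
Your proposal identifies the right two ingredients — the exclusion theorem of Chudnovsky, Fradkin, and Seymour and a cutwidth ordering — but in both the packing and the covering parts you are missing the specific moves that make the argument close, and you correctly sense this yourself.

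In the packing regime, you do not need the uniform bound $\sigma(H)$ on the number of arcs in an immersion model at all; that bound would indeed require extra work (a short-path argument via semicompleteness, mindful of arc-disjointness between different realizing paths) and the paper avoids it entirely. The trick is to apply the exclusion theorem not to $H$ but to the disjoint union $k \cdot H$: if $\ctw(G) \ge \eta_{k\cdot H}$ then $G$ immerses $k\cdot H$, and since the $k$ copies of $H$ in $k\cdot H$ are vertex- and arc-disjoint, their images immediately give $k$ arc-disjoint subdigraphs each immersing $H$. No iterative extraction and no cutwidth-drop bookkeeping is needed. This is the directed analogue of the standard ``exclude $k\cdot H$ instead of $H$'' move in Erd\H{o}s--P\'osa arguments, and it is the step you should have reached for.

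In the covering regime you are right that a naive two-sided recursion gives an $f$ depending on $|V(G)|$, but the fix is simpler than ``interleaving the regimes.'' Fix a minimum-width ordering $v_1,\dots,v_n$ and let $i$ be the \emph{smallest} index such that $G[\{v_1,\dots,v_i\}]$ contains a minimal $H$-immersing subdigraph $J$. Removing the at most $t := \eta_{k\cdot H}$ forward arcs of the cut between positions $i-1$ and $i$ ensures, since the minimal $H$-immersing subdigraphs are strongly connected, that every surviving one lies entirely in $\{v_i,\dots,v_n\}$ (the left part is clean by minimality of $i$). Crucially, $J$ is arc-disjoint from $G[\{v_i,\dots,v_n\}]$, so $J$ certifies that the right part has packing number at most $k-1$; one then inducts on $k$, not on $|V(G)|$, and gets $f(k) = k\cdot t$ directly. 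Also note a terminological slip: the arcs bounded by cutwidth (and the ones you must delete) are the forward, left-to-right arcs of the cut, not the ``backward'' ones.

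So the proposal is structurally on the right track but has two genuine gaps: the unresolved $\sigma(H)$ bound (which the $k\cdot H$ trick renders unnecessary) and an incomplete covering induction. Neither gap is fatal in the sense of being unfillable, but as written the proof does not go through.
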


\autoref{main:pw} and \autoref{main:ctw} can be easily extended to finite
families of graphs $H$, as noted in their proofs.
These theorems deal with the two variants of the Erdős--Pósa
property: the first one is related to vertex-disjoint
subdigraphs and sets of vertices (vertex version), whereas the second
one is concerned with arc-disjoint subdigraphs and sets of arcs (arc version).
In \autoref{main:ctw}, the requirement on the order of $H$ is necessary as we cannot
cover an arcless subdigraph (as the one-vertex digraph) with arcs.
%
Our proofs rely on exclusion results for the parameters of cutwidth
and pathwidth, that are stated in the sections where they are used.

The techniques that we use are originated from the undirected setting,
where they have been repeatedly applied (see for instance
\cite[(8.8)]{RobertsonS86GMV} and
\cite[Lemma~2.3]{FioriniJW12excl}).
They deal with
structural decompositions like tree decompositions or tree-cut
decompositions and their associated widths and can be informally
described as follows. If the
considered host graph has large width, then, using a structural result, we
can immediately conclude that it contains several disjoint subgraphs
of the desired type. Otherwise, the graph admits a structural
decomposition of small width, that can be used to find a small set of
vertices/edges covering all such subgraphs (see~\cite[Theorem
3.1]{Raymond2016recent} for an unified presentation in undirected graphs).
Similar ideas have been used in the context of
directed graphs in the proof of \autoref{kreutzer}. With this note, we
provide more examples of cases where the techniques used in the
undirected setting appear useful when dealing with digraphs.

\section{Hitting minors and subdivisions}

This section is devoted to the proof of \autoref{main:pw}.
For every $k\in \N$ and every graph $H$, we denote by $k \cdot H$ the disjoint union of $k$ copies of~$H$.
The structural decompositions that we use in this section are path
decompositions. Formally, a \emph{path-decomposition} of a digraph $G$
is a sequence $(X_1, \dots, X_r)$ of subsets of $V(G)$ satisfying the
following properties:
\begin{enumerate}[(i)]
\item $V(G) = \bigcup_{i=1}^r X_i$
\item for every arc $(u,v) \in E(G)$, there are integers $i$ and $j$ with $1 \leq j \leq i \leq r$
and such that $u \in X_i$ and $v
  \in X_j$;\label{e2}
\item for every $i,j \in \intv{1}{r}$, if a vertex $u \in V(G)$
  belongs to $X_i$ and $X_j$, then it also belongs to $X_k$ for every
  $k \in \intv{i}{j}$.
\end{enumerate}

The sets $\{X_i\}_{i\in \intv{1}{r}}$ are called \emph{bags} of the
path-decomposition. Intuitively, item \itemref{e2} asks that every arc of
$G$ either have its endpoints in some bag, or is oriented
``backwards''.
The \emph{width} of the above path-decomposition is defined as $\max_{i \in \intv{1}{r}} |X_i|-1$. The \emph{pathwidth} of $G$ is the minimum width over all path-decompositions of $G$.
The following properties of pathwidth are crucial in our proofs.
\begin{theorem}[\protect{\cite[Theorem 2.2.7]{kim2013containment},
  \cite[(1.1)]{Fradkin2013tourn}, \cite[(1.4)]{Kim2015138}}]\label{omega}
  For every digraph $H$, there is a positive integer $w$ such
  that every semicomplete $G$ that has pathwidth more than $w$
  contains $H$ as a strong minor, butterfly minor and topological minor.
\end{theorem}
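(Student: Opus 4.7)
\textbf{Plan for proving \autoref{omega}.} The statement compiles three separate results from the literature, one for each notion of minor. I would follow their common template: prove that large pathwidth in a semicomplete digraph forces a sufficiently rich ``scaffold'' subdigraph, then embed $H$ into that scaffold using the semicomplete property to supply the paths and contractions required by the relevant minor notion.

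\emph{Extracting a scaffold.} First I would aim for a duality of the following form: there exists a function $g$ such that every semicomplete digraph $G$ of pathwidth at least $g(h)$ contains a long sequence $v_1,\dots,v_N$ of vertices that is ``linked'', in the sense that between any two long enough intervals of indices one finds many internally vertex-disjoint directed paths. The contrapositive is where the combinatorial work lies: if no such linked sequence exists in $G$, one constructs an explicit path-decomposition of small width by iteratively removing a bounded-size cutset whose existence is provided by the failure of Menger's condition. Semicompleteness is the feature that makes such connectivity arguments genuinely usable for routing, since between any two vertex sets of sizes $a$ and $b$ there must be at least $ab$ oriented arcs, so whenever two parts of the scaffold are individually large they are linked by many arcs in one direction.

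\emph{Embedding $H$.} Once the scaffold is located inside $G$, I would pick an ordered $|V(H)|$-tuple of sufficiently well-separated vertices in $v_1,\dots,v_N$ as branch vertices and route the arcs of $H$ through the guaranteed internally disjoint paths. For topological minors, careful ordering of branch vertices along the spine together with the linkedness property preserves both the orientation of arcs and the internal-vertex-disjointness of the routes. For butterfly minors, one further observes that each routing path can be collapsed by contracting its contractible arcs, which is permitted by the definition. For strong minors, one groups each routing path together with suitable back-arcs supplied by semicompleteness into a strongly-connected cluster, which is then contracted to a single vertex.

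\emph{Main obstacle.} The genuinely hard step is the scaffold extraction with a good quantitative dependency on $h$, that is, the ``excluded-pathwidth theorem'' for semicomplete digraphs. Showing that the absence of a linked structure of the desired size really yields a small-width path-decomposition requires a careful potential-function argument controlling how the cutsets from successive peeling steps interact, and in particular ruling out the usual pathologies of directed connectivity (which, unlike the undirected case, need not be symmetric with respect to reversal). The three minor notions also require slightly different tailored scaffolds --- transitive scaffolds with branch-and-contract flexibility for strong minors, contractible-arc-rich scaffolds for butterfly minors, and fully subdivided scaffolds for topological minors --- which is why three separate references are cited.
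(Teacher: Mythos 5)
This theorem is not proved in the paper: it is quoted as a combination of three external results (Kim's thesis, Theorem~2.2.7; Fradkin and Seymour, (1.1); Kim and Seymour, (1.4)), so there is no proof in the paper against which your plan can be compared. Your sketch captures, at a very high altitude, the common template of those references --- that large pathwidth in a semicomplete digraph forces a structured substructure into which any fixed $H$ can be embedded --- but it is a statement of intent rather than a proof. You yourself name the decisive step (the excluded-pathwidth structure theorem for semicomplete digraphs) as ``the genuinely hard step'' and then give only an informal description of a ``linked sequence'' duality, neither supplying the potential-function argument you say is needed nor identifying the actual structures the cited authors develop. A plan that defers all of its mathematical content to an unproved lemma which it explicitly flags as the main obstacle does not establish the theorem; at best it is a reasonable guess at the shape of the proofs in the literature.

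Two smaller points. First, the paper already observes that every topological minor is a butterfly minor, so once the topological-minor version is in hand the butterfly-minor version follows immediately; your proposal of a separately tailored ``contractible-arc-rich scaffold'' is unnecessary and suggests you have not noticed this reduction. Second, the three cited results were proved in three separate works with machinery specific to each containment relation (and strong minors in particular require a genuinely different embedding argument, not merely ``suitable back-arcs'' appended to a subdivision), so framing the theorem as a single unified argument with three scaffold flavours materially understates the independent work behind each part.
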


\begin{theorem}[{\cite[Theorem~2]{Kitsunai2015}}]\label{th:kit}
For every $s,w \in \N$, there is a positive integer $w'$ such that
every $s$-semicomplete digraph with pathwidth at least $w'$ has a
subdigraph that is semicomplete and is of pathwidth at least $w$.
\end{theorem}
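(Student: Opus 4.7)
The plan is the following. For $s=0$ the statement holds trivially with $w'=w$, as $G$ is then already semicomplete. For $s\geq 1$, I would exploit that in an $s$-semicomplete digraph $G$ the undirected \emph{non-adjacency graph} $N$ on $V(G)$---with an edge for each pair of vertices having no arc between them in either direction---has maximum degree at most $s$, hence chromatic number at most $s+1$. A proper $(s+1)$-coloring of $N$ yields a vertex partition $V(G)=V_1\cup\cdots\cup V_{s+1}$ such that each induced subdigraph $G[V_i]$ is semicomplete. It then suffices to show that, for $w'$ chosen large enough in terms of $s$ and $w$, at least one $G[V_i]$ has pathwidth at least $w$.

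The main obstacle is precisely this last step: pathwidth can drop drastically under arbitrary vertex partitions (as already seen for undirected $K_{n,n}$ split into its two color classes), so any lower bound on $\max_i\pw(G[V_i])$ must genuinely exploit the $s$-semicompleteness of $G$. The strategy I would pursue combines the partition with structural obstructions to large pathwidth available in (nearly) semicomplete digraphs, in the spirit of \autoref{omega} and the Chudnovsky--Fradkin--Seymour structure theory: if $\pw(G)\geq w'$ then $G$ contains a large explicit witness of large pathwidth (for instance a ``jungle'' or a haven of large order); a Ramsey or pigeonhole argument over the $s+1$ color classes should then locate, inside some $G[V_i]$, a still-large witness forcing $\pw(G[V_i])\geq w$. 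Quantifying this pigeonhole---controlling how much the relevant witness can shrink when restricted to a single color class---is the crux.

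As a backup route, one can try to ``clean up'' $G$ iteratively by repeatedly picking a non-adjacent pair and deleting one of its endpoints. The class of $s$-semicomplete digraphs is closed under vertex deletion and each deletion lowers pathwidth by at most one, so the question reduces to bounding, in terms of $w'-w$, the number of deletions required to reach semicompleteness. However, a naive vertex-cover argument on $N$ can cost $\Omega(n)$ deletions (think of $N$ being a perfect matching), so this route likewise requires a non-trivial structural lemma---essentially a strengthened form of the theorem itself---asserting that an $s$-semicomplete digraph of pathwidth at least $w'$ can always be pruned to a semicomplete subdigraph of pathwidth at least $w$ by few well-chosen deletions. For this reason I expect the partition-plus-witness approach to be the cleaner path, with the quantitative pigeonhole on the pathwidth obstructions inside semicomplete digraphs being the principal technical hurdle.
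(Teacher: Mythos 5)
First, note that the paper itself does not prove \autoref{th:kit}; it is quoted as Theorem~2 of \cite{Kitsunai2015} and used as a black box. So there is no in-paper proof to compare yours against; what can be assessed is whether your proposal stands on its own.

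It does not: you lay out a reduction and then explicitly defer the step that carries all the mathematical content. The $(s+1)$-coloring of the non-adjacency graph is a sound and natural first move, and it does produce a partition $V(G)=V_1\cup\cdots\cup V_{s+1}$ with each $G[V_i]$ semicomplete. But the theorem now becomes: if $\pw(G)\geq w'$ then $\max_i\pw(G[V_i])\geq w$, and you acknowledge---correctly---that this is ``the crux'' and that pathwidth has no useful monotonicity under vertex partitions in general. At that point the proposal stops. You gesture at ``jungles,'' ``havens,'' and a ``Ramsey or pigeonhole argument over the $s+1$ color classes,'' but no such argument is given, and it is not clear one exists in the form you describe. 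The obstructions you name are tied to the Chudnovsky--Fradkin--Seymour theory of \emph{semicomplete} digraphs (as in \autoref{omega}); to invoke them inside $G$ you would first need a structure theorem for large-pathwidth \emph{$s$-semicomplete} digraphs, which is essentially what \cite{Kitsunai2015} develop and what \autoref{th:kit} encapsulates. So the proposed route is circular in its dependence: the missing ``quantitative pigeonhole on pathwidth obstructions'' is not a technical detail to be filled in later, it is the theorem. Your backup route has the same shape---you correctly observe that naive vertex deletion can cost $\Omega(n)$ removals and that one would need ``a strengthened form of the theorem itself.'' In short, both paths you sketch reduce the statement to an unproved lemma of comparable difficulty, and neither reduction is completed, so the proposal has a genuine gap rather than a complete alternative proof.
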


\begin{corollary}\label{c:zeta}
  For every $s\in \N$ and every digraph $H$, there is a constant $\zeta_{s,H}$
  such that every $s$-semicomplete digraph that has pathwidth at least
  $\zeta_{s,H}$ contains $H$ as a strong minor, butterfly minor and
  topological minor.
\end{corollary}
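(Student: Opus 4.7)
The plan is essentially to chain \autoref{omega} and \autoref{th:kit} together, since the corollary is structured to be an immediate consequence of these two results applied in succession.

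First I would invoke \autoref{omega} with the given digraph $H$ to obtain a positive integer $w = w(H)$ such that every semicomplete digraph of pathwidth greater than $w$ contains $H$ as a strong minor, as a butterfly minor, and as a topological minor (the same $w$ works for all three containment notions since \autoref{omega} asserts all three simultaneously). Next I would apply \autoref{th:kit} with this $w+1$ and with the given $s$ to obtain a positive integer $w' = w'(s, w)$ such that every $s$-semicomplete digraph of pathwidth at least $w'$ has a subdigraph that is semicomplete and has pathwidth at least $w+1$.

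The corollary then follows by setting $\zeta_{s,H} := w'$. Indeed, given an $s$-semicomplete digraph $G$ with pathwidth at least $\zeta_{s,H}$, \autoref{th:kit} produces a semicomplete subdigraph $G' \subseteq G$ of pathwidth at least $w+1 > w$. Applying \autoref{omega} to $G'$ gives that $G'$ contains $H$ as a strong minor, butterfly minor, and topological minor. Since each of these containment relations is preserved when passing from a subdigraph to the host digraph, $G$ itself contains $H$ in all three senses.

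There is no real obstacle here: the corollary is a direct composition, and the only thing to verify is that the three containment notions in \autoref{omega} all transfer from a subdigraph to the ambient digraph, which is immediate from their definitions (a strong/butterfly/topological minor model of $H$ in $G'$ is also one in $G$).
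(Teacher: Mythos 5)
Your proof is correct and is exactly the intended argument: the paper states the corollary without proof precisely because it is this immediate composition of \autoref{omega} and \autoref{th:kit}. The small bookkeeping step of using $w+1$ (to bridge the ``more than $w$'' of \autoref{omega} with the ``at least $w'$'' of \autoref{th:kit}) and the observation that minor/subdivision containment is inherited from a subdigraph are both handled correctly.
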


A classic result states that if a collection of subpaths of a path
does not contain more than $k$ vertex-disjoint elements, then there is
a set of $k$ vertices meeting all the
subpaths (see~\cite{gyarfas1970helly}).
We use here the following generalization of the above statement, due to Alon.

\begin{lemma}[\cite{alon98piercing}]\label{l:alonpiercing}
  Let $P$ be a path (undirected) and let $\mathcal{P}$ be a collection of subgraphs
  of $P$ that does not contain $k+1$ pairwise vertex-disjoint members. Then
  there is a set of $2p^2k$ vertices of $P$ meeting every element of
  $\mathcal{P}$, where $p$ is the maximal number of connected
  components of a graph in~$\mathcal{P}$.
\end{lemma}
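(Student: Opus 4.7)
The plan is to identify the vertices of $P$ with points on the real line so that each member of $\mathcal{P}$ becomes a \emph{$p$-interval}, i.e., a union of at most $p$ subpaths of $P$. Writing $\tau$ and $\nu$ respectively for the minimum size of a transversal and the maximum size of a packing of $\mathcal{P}$, we have $\nu\le k$ by hypothesis and the goal is $\tau\le 2p^{2}k$. I would sandwich $\tau$ via the fractional relaxations $\tau^{*}$ and $\nu^{*}$: by LP duality $\tau^{*}=\nu^{*}$, and trivially $\nu\le\nu^{*}$ and $\tau^{*}\le\tau$. The crucial structural input is the case $p=1$ (Gy\'arf\'as' theorem recalled just before the lemma): subpath hypergraphs of a path are totally balanced, so for them the covering and packing numbers coincide and both equal their fractional counterparts.

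I would split the target bound into two steps, each losing a factor of order $p$. For $\tau\le p\,\tau^{*}$: given an optimal fractional cover $x$ of weight $\tau^{*}$, every $S\in\mathcal{P}$ has some component $S^{j(S)}$ with $\sum_{v\in S^{j(S)}}x_{v}\ge 1/p$, so $p\cdot x$ is a fractional cover of the subpath family $\{S^{j(S)}:S\in\mathcal{P}\}$; by Gy\'arf\'as this family admits an integer transversal of size at most $p\,\tau^{*}$, and any such transversal already hits every $S\in\mathcal{P}$. For $\nu^{*}\le 2p\,\nu$: decompose every $S\in\mathcal{P}$ into its components $S^{1},\dots,S^{p}$ (some possibly empty) and set $\mathcal{P}^{j}=\{S^{j}:S\in\mathcal{P}\}$ for $j\in\intv{1}{p}$. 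Spreading the optimal fractional packing $y$ of $\mathcal{P}$ evenly over components produces feasible fractional packings of the $\mathcal{P}^{j}$ of total weight $\nu^{*}$, so by pigeonhole some $\mathcal{P}^{j}$ admits a fractional packing of weight at least $\nu^{*}/p$ and, by Gy\'arf\'as, an integer matching of this size. The remaining task is to convert this matching of subpath components into a genuine matching of the full $p$-intervals, losing at most a further constant factor; this is Alon's fractional Helly-type extraction argument, and absorbs the factor $2$. Combining the two steps yields $\tau\le p\,\tau^{*}=p\,\nu^{*}\le 2p^{2}\,\nu\le 2p^{2}k$.

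The main obstacle is the passage from a matching of $\mathcal{P}^{j}$ to a matching of $\mathcal{P}$: pairwise disjoint $j$-th components do not imply pairwise disjoint $p$-intervals, because the remaining $p-1$ components of the selected elements can interfere arbitrarily, even across long distances, so no naive left-to-right greedy scan suffices. Taming this interference is precisely where the assumption that each element has at most $p$ components is used in force, and it is the technical heart of Alon's argument; the other ingredients---LP duality, the pigeonhole over components, and Gy\'arf\'as' interval Helly property---are classical and essentially immediate to invoke in this setting.
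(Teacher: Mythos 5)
The paper does not prove this lemma: it is stated as a black box with a citation to Alon's \emph{Piercing $d$-intervals} (1998), so there is no in-paper proof to compare against. Judged on its own terms, your proposal correctly sets up the LP-duality framework and your first step, $\tau\le p\,\tau^{*}$ via the ``heavy component'' argument followed by the interval Helly property, is sound and indeed matches the first half of Alon's argument.

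The problem is the second step, $\nu^{*}\le 2p\,\nu$, which you openly leave as a gap but also sketch a plan for that does not work. After the pigeonhole over the index $j$ you obtain a subfamily $\mathcal{S}\subseteq\mathcal{P}$ of size at least $\nu^{*}/p$ whose $j$-th components $\{S^{j}:S\in\mathcal{S}\}$ are pairwise vertex-disjoint, and you claim one can ``convert this matching of subpath components into a genuine matching of the full $p$-intervals, losing at most a further constant factor.'' That conversion step, as stated, is false: take $p=2$ and $S_{i}=\{0,1\}\cup\{2i,2i+1\}$ for $i=1,\dots,m$. The second components are pairwise disjoint, so $\mathcal{P}^{2}$ has a matching of size $m$, yet $\nu(\mathcal{P})=1$ since all $S_{i}$ share the segment $\{0,1\}$. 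So from pairwise-disjoint $j$-th components one cannot in general recover a matching of $\mathcal{P}$ of comparable size, and no constant-factor loss is available. (Your pigeonhole conclusion itself is not contradicted by this example, since there $\nu^{*}=1$; the point is that the ``conversion'' does not have the form you invoke.) The quantity you would actually need is a lower bound on $\nu(\mathcal{P})$ extracted directly from the fractional packing $y$ of $\mathcal{P}$, and Alon obtains it by a left-to-right greedy on the $p$-intervals ordered by the right endpoint of their leftmost component, together with a charging argument bounding the $y$-weight discarded at each greedy step by $2p$; this is genuinely different from ``first pigeonhole to one component index, then fix it up.'' As it stands, the heart of the lemma is missing, and the sketched route to it would not close the gap even if written out in full.
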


A \emph{strongly-connected component} of a digraph is a maximal
subdigraph that is strongly-connected. Observe that a single vertex
may be a strongly-connected component.
If a subdigraph of a digraph $G$ is isomorphic to some member of a digraph
class $\mathcal{H}$, we call it an \emph{$\mathcal{H}$-subdigraph} of~$G$.

\begin{lemma}\label{boundtw-minor}
Let $\mathcal{H}$ be a (possibly infinite) class of digraphs with at most $p$
strongly-connected components.
For every digraph $G$ and a positive integer $k$, one of the following holds:
\begin{inparaenum}[(a)]
\item $G$ contains $k$ pairwise vertex-disjoint $\mathcal{H}$-subdigraphs; or\label{e:pack}
\item there is a set $X \subseteq V(G)$ with $|X|\leq 2p^2(k-1)(\pw(G)+1)$\label{e:cover}
  such that $G\setminus X$ has no $\mathcal{H}$-subdigraph.
\end{inparaenum}
\end{lemma}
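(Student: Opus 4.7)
The plan is to reduce the question on $G$ to a piercing question on a path of bag indices, for which \autoref{l:alonpiercing} is tailor-made. Fix an optimal path-decomposition $(X_1,\dots,X_r)$ of $G$, and let $P$ be the underlying (undirected) path on vertex set $\intv{1}{r}$. For each vertex $v \in V(G)$, the set $I(v) = \{i : v \in X_i\}$ is an interval by item (iii) of the definition, so for every subdigraph $S \subseteq G$ one may consider the index set $\phi(S) = \{i : X_i \cap V(S) \neq \emptyset\}$.

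The main step, and the only place where the specifics of the decomposition are used, would be to establish the following structural claim: for every strongly-connected subdigraph $S$ of $G$, the set $\phi(S)$ is a contiguous interval of $\intv{1}{r}$. I would prove this by contradiction. If some interior index $i \in \intv{1}{r}$ were absent from $\phi(S)$, then $V(S)$ would split into the two non-empty parts $V_L = \{v \in V(S) : \max I(v) < i\}$ and $V_R = \{v \in V(S) : \min I(v) > i\}$, since no vertex of $S$ lies in $X_i$. Strong connectivity of $S$ would then provide a directed path from some vertex of $V_L$ to some vertex of $V_R$, hence an arc $(u,w)$ with $u \in V_L$ and $w \in V_R$; but then every bag containing $u$ has index $< i$ while every bag containing $w$ has index $> i$, contradicting item \itemref{e2}.

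Granted this claim, each $\mathcal{H}$-subdigraph $H$ of $G$ has $\phi(H)$ equal to the union of the at most $p$ intervals coming from its strongly-connected components, so the subgraph of $P$ induced by $\phi(H)$ has at most $p$ connected components. I would then apply \autoref{l:alonpiercing} to the family $\{\phi(H) : H \text{ is an } \mathcal{H}\text{-subdigraph of } G\}$ in $P$. If the family contains $k$ pairwise vertex-disjoint members, the corresponding $\mathcal{H}$-subdigraphs of $G$ are themselves pairwise vertex-disjoint, since disjoint index sets forbid a common bag and hence a common vertex; this gives~\itemref{e:pack}. Otherwise the lemma yields a set $Y$ of at most $2p^2(k-1)$ indices of $P$ meeting every $\phi(H)$, and setting $X = \bigcup_{i \in Y} X_i$ produces a hitting set of size at most $2p^2(k-1)(\pw(G)+1)$ for all $\mathcal{H}$-subdigraphs, giving~\itemref{e:cover}. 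The delicate step is the structural claim, because the decomposition used here permits backward arcs and the claim is about an arbitrary strongly-connected substructure rather than about $G$ itself; everything else is bookkeeping on top of \autoref{l:alonpiercing}.
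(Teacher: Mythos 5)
Your proposal is correct and follows essentially the same route as the paper: project each candidate subdigraph onto the index path of an optimal path-decomposition, observe that strong connectivity forces a contiguous interval (so at most $p$ components for each $\mathcal{H}$-subdigraph), and invoke \autoref{l:alonpiercing}. The only cosmetic differences are that you drop the paper's (unused) outer induction on $k$ and you spell out the proof that a strongly-connected subdigraph projects to an interval, which the paper merely asserts.
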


\begin{proof}
We proceed by induction on $k\in \N$.
The base case $k=1$ is trivial. Let us prove the statement of the
lemma for
$k>1$ assuming that it holds for all lower values of~$k$ (induction
step). For this we consider a digraph $G$ such that \itemref{e:pack} does not hold.
Let $(X_1, \dots, X_l)$ be a path decomposition of~$G$ of minimum
width. Let $P$ be the undirected path on vertices $v_1, \dots, v_l$,
in this order.
For every subdigraph $F$ of $G$, we set:
\begin{align*}
A_F &= \{i \in \intv{1}{l},\ V(F) \cap X_i
  \neq \emptyset\}\quad \text{and}\\
P_F &= P \left [ \{v_i,\ i \in A_F\}\right ].  
\end{align*}

In other words, $A_F$ is the set of indices of the bags met by $F$ and
$P_F$ is the subgraph of $P$ induced by the vertices with these
indices. 
For every $\mathcal{H}$-subdigraph $H$ of $G$, we consider the
subgraph $P_H$ of~$P$. Let us denote by $\mathcal{P}$ the class of all such
graphs (which is finite). Notice that for every pair $F,F'$ of subdigraphs of $G$, if
$P_F$ and $P_{F'}$ are vertex-disjoint, then so are $F$ and~$F'$.
Using our initial assumption on $G$, we deduce that
$\mathcal{P}$ does not contain $k$ pairwise vertex-disjoint members.
Besides, if $F$ is strongly-connected, then $P_F$ is
connected. Moreover, if $F$ has at most $p$ strongly-connected
components, then $P_F$ has at most $p$ connected components.
Hence, every member of $\mathcal{P}$ has at most $p$ connected components.

By the virtue of \autoref{l:alonpiercing}, there is a set $Q$ of
$2p^2(k-1)$ vertices of $P$ such that $P \setminus Q$ does not contain
a subgraph of~$\mathcal{P}$. Let $X = \bigcup_{i\in \{j \in
  \intv{1}{l},\ v_j \in Q\}} X_i$. Let us show that $X$ satisfies the
requirements of \itemref{e:cover}. By contradiction, we assume that $G
\setminus X$ has an $\mathcal{H}$-subdigraph $H$. Then $P_H \in
\mathcal{P}$. Let $v_i$ be a vertex of $V(P_H) \cap Q$, which, by
definition of $Q$, is not empty. Then $X_i \subseteq X$ and $V(H) \cap
X_i \neq \emptyset$. This contradicts the fact that $H$ is a
subdigraph of $G \setminus X$. Consequently, $X$ is as required. As it
is the union of $2p^2(k-1)$ bags of an optimal path decomposition of
$G$, we have $|X| \leq 2p^2(k-1)(\pw(G)+1)$. This concludes the proof. 
\end{proof}

We would like to mention that a weaker form of \autoref{boundtw-minor} where
$\mathcal{H}$ consists of digraphs whose connected components are
strongly-connected can be obtained by adapting the ideas used by
\cite[(8.8)]{RobertsonS86GMV}, with a dependency in $p$ that is linear
instead of quadratic.

\begin{lemma}\label{l:cc-sc}
  Let $G$ be a digraph and let $H$ be the digraph obtained by contracting
  one strongly-connected subdigraph $S$ of $G$ to one single vertex $v_S$. Then $H$ and $G$ have
  the same number of strongly-connected components.
\end{lemma}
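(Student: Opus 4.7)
The plan is to build an explicit bijection between the strongly-connected components (SCCs) of $G$ and those of $H$, exploiting the fact that every vertex of $V(S)$ lies in a single SCC of~$G$. Since $S$ is strongly-connected, there is a unique SCC $C_S$ of $G$ with $V(S)\subseteq V(C_S)$; enumerate the remaining SCCs as $C_2,\dots,C_m$. Define corresponding subsets of $V(H)$ by $D_S=(V(C_S)\setminus V(S))\cup\{v_S\}$ and $D_i=V(C_i)$ for $i\ge 2$. These sets partition $V(H)$, and the claim to prove is that they are precisely the SCCs of~$H$.

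Next I would show that each $D_i$ induces a strongly-connected subdigraph of $H$. For $i\ge 2$ this is immediate, because no vertex or arc of $G[V(C_i)]$ is touched by the contraction. For $D_S$, given $u,v\in D_S$, I would take a closed directed walk of $G[V(C_S)]$ visiting their preimages in $G$ and replace every maximal subwalk that stays in $V(S)$ by a single occurrence of $v_S$; this yields a closed directed walk in $H[D_S]$, because every arc of $G$ with exactly one endpoint in $V(S)$ becomes an arc of $H$ incident to~$v_S$.

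Finally, the main point is maximality: no two distinct $D_i$'s can belong to a common SCC of $H$. Suppose for contradiction that some closed directed walk of $H$ meets two distinct $D_i$'s. Each arc of this walk not incident to $v_S$ is already an arc of $G$, and each boundary arc $(v_S,x)$ or $(x,v_S)$ came from some arc $(s,x)$ or $(x,s)$ of $G$ with $s\in V(S)$; two consecutive boundary crossings at $v_S$ can then be stitched together in $G$ by inserting a directed path inside $S$, which exists because $S$ is strongly-connected. Concatenating these pieces produces a closed directed walk of $G$ meeting two distinct SCCs of $G$, a contradiction. This lifting step through $v_S$ is the only genuine subtlety; once it is established, the partition of $V(H)$ into the $D_i$'s is exactly its partition into SCCs, so $H$ has $m$ SCCs, the same number as~$G$.
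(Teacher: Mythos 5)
Your proof is correct and follows essentially the same route as the paper: both arguments construct an explicit bijection between the strongly-connected components of $G$ and those of $H$, with the same key steps (images induce strongly-connected subdigraphs; stitching walks through the contracted vertex by inserting a path inside $S$; maximality). The only cosmetic difference is the direction in which the bijection is built — you map $G$-components to subsets of $V(H)$, whereas the paper maps $H$-components to subsets of $V(G)$ and then checks injectivity and surjectivity separately; your direction makes the partition of $V(H)$ and hence the bijection slightly more transparent, but the underlying argument is the same.
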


\begin{proof}
  Let $f$ be the map such that, for every $C \subseteq V(H)$ that induces a
  strongly-connected component,
  \[f(C) = \left \{
      \begin{array}{ll}
        C&\text{if $v_S \not \in C$}\\
        (C \setminus \{v_S\}) \cup S&\text{otherwise}
      \end{array}
    \right .\]
  Let $C$ be a subset of $V(H)$ that induces a
  strongly-connected component and let us show that $f(C)$ induces a
  strongly-connected subdigraph of~$G$. For this, we show that, for any
  $x,y \in f(C)$, there is a directed path from $x$ to $y$. If $x,y
  \in S$, this is true since in this case, $S \subseteq f(C)$ and $G[S]$ is
  strongly-connected. If none of $x,y$ belongs to $S$, they are both
  vertices of $C$ as well. Let $v_0 \dots v_l$ be a directed path from
  $x = v_0$ to $y=v_l$ in $H$. Notice that $v_0, \dots, v_l \in C$. If $v_S$ does not belong to this path,
  then this is a path of $G$ as well and we are done. Otherwise, let
  $i$ be such that $v_S = v_i$. By definition of $H$, there are arcs
  $(v_{i-1},u)$ and $(u', v_{i+1})$ in $G$, for some $u,u' \in S$. As
  $G[S]$ is strongly-connected, it contains a directed path $Q$ from
  $u$ to $u'$. Therefore, concatenating $v_0\dots v_{i-1}u$, $Q$, and
  $u'v_{i+1}\dots v_l$ yields a path from $x$ to $y$. The case where
  exactly one of $x,y$ belongs to $S$ is similar. Consequently, $f(C)$ induces a
  strongly-connected subdigraph in~$G$.

  Let us now show that $f(C)$ is
  a strongly-connected component. By contradiction, let us assume that
  there is in $G$ a directed walk $u_0 \dots u_l$ with $l >1$ such that $\{u_0 \dots u_l\}
  \cap f(C) = \{u_0, u_l\}$. If $S \cap (f(C) \cup \{u_0 \dots
  u_l\}) = \emptyset$ then $C$ does not induce a strongly-connected component
  of $H$, a contradiction. If $S \cap f(C) = \emptyset$, then $f(C) =
  C$ and then $G$ has
  a path (that is a subpath of $u_0 \dots u_l$) from a vertex of $f(C)$
  to one of $S$ and vice-versa. Therefore, there is in $H$ a path from a
  vertex of $C$ to $v_S$ and vice-versa, which contradicts the
  definition of~$C$. Thus $S$ intersects $f(C)$: by
  definition of $f$ we have $S \subseteq f(C)$ and $v_S \in C$. We
  deduce that $P=u_0 \dots u_l$ (or $P=v_S, u_1 \dots u_l$, resp.\ $P=u_0 \dots
  u_{l-1}, v_S$ if $u_0\in S$, resp.\ $u_l \in S$) is an oriented walk of $H$ on at least 3
  vertices that has its endpoints in $C$ and contains vertices that do not belong to $C$.
  This is not possible since $C$
  induces a strongly-connected component of~$H$. We deduce that $f(C)$
  is a strongly-connected component of~$G$.
   
  The function $f$ is clearly
  injective. Let us show that it is surjective. Let now $C\subseteq V(G)$ be a
  strongly-connected component of $G$. If $C$ contains a vertex of
  $S$, then $S \subseteq C$ as $C$ is a maximal strongly-connected
  subdigraph and $S$ is
  strongly-connected. In this case observe that $f( C \setminus S) \cup
  \{v_S\} = C$. Otherwise, $C \cap S = \emptyset$ and $f(C) = C$.
  Strongly-connected components of $G$ are in bijection with those of
  $H$, hence they are equally many.
\end{proof}

\begin{corollary}\label{c:sconn}
  Let $H$ be a digraph and let $G$ be a subdigraph-minimal digraph
  containing $H$ as a strong minor. Then $H$ and $G$ have the same number of
  strongly-connected components.
\end{corollary}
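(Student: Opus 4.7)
The plan is to reduce the statement to an iteration of \autoref{l:cc-sc}. By the definition of strong minor recalled earlier, there exist a subdigraph $G' \subseteq G$ and a finite sequence of contractions of strongly-connected subdigraphs that transforms $G'$ into a digraph isomorphic to $H$. My first step is to observe that subdigraph-minimality of $G$ forces $G'=G$: otherwise, $G'$ would itself be a strictly smaller subdigraph of $G$ still containing $H$ as a strong minor, contradicting the minimality assumption. Consequently, $H$ is obtained from $G$ itself by a (possibly empty) finite sequence of contractions of strongly-connected subdigraphs to single vertices.

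Next, I would chain together applications of \autoref{l:cc-sc}. Writing the sequence of contractions as
\[
G = G_0 \to G_1 \to \cdots \to G_t \cong H,
\]
each transition $G_i \to G_{i+1}$ contracts a single strongly-connected subdigraph of $G_i$ to one vertex. By \autoref{l:cc-sc}, this operation preserves the number of strongly-connected components, so a straightforward induction on~$t$ yields that $G$ and $H$ have the same number of strongly-connected components (using that isomorphic digraphs have the same number of strongly-connected components for the final step).

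The argument is essentially a bookkeeping exercise: the real work has already been done in \autoref{l:cc-sc}, and the only new input is the minimality reduction, which is immediate. The only point that deserves a moment of care is checking that the definition of strong minor permits iterated contractions (so that \autoref{l:cc-sc} can be applied one contraction at a time rather than just once), but this is built into the formulation used in the paper.
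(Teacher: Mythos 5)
Your proof is correct and follows exactly the intended route: the paper leaves this corollary without an explicit proof precisely because it is the iteration-of-\autoref{l:cc-sc} argument you spell out, preceded by the (immediate) observation that minimality forces the subdigraph $G'$ in the definition of strong minor to equal $G$ itself.
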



\begin{lemma}\label{l:sm}
  For every (possibly infinite) family $\mathcal{H}$ of digraphs with bounded number of
  strongly-connected components and every $s\in
  \N$, there is a function $f \colon \N \to \N$ such that, for every
  $s$-semicomplete digraph $G$ and every positive integer $k$, one of
  the following holds:
  \begin{inparaenum}[(a)]
  \item $G$ contains $k$ vertex-disjoint subdigraphs, each having a
    digraph of $\mathcal{H}$ as a strong minor; or\label{e:first}
  \item there is a set $X \subseteq V(G)$ with $|X|\leq f(k)$ such
    that $G \setminus X$ contains no digraph of $\mathcal{H}$ as a
    strong minor.\label{e:second}
  \end{inparaenum}
\end{lemma}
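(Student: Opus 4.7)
The plan is to run the now-standard pathwidth dichotomy: if the host digraph has large pathwidth, extract many vertex-disjoint copies directly via the structural theorem, and otherwise invoke the packing/covering machinery of \autoref{boundtw-minor} on an optimal path decomposition. Let $p$ be the common bound on the number of strongly-connected components of members of $\mathcal{H}$, assume $\mathcal{H} \neq \emptyset$ (otherwise \itemref{e:second} is trivial), and fix any $H_0 \in \mathcal{H}$. Let $W = \zeta_{s,\, k \cdot H_0}$, where $\zeta$ is the constant supplied by \autoref{c:zeta}.

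First I would dispatch the large-pathwidth case. If $\pw(G) \geq W$, then since $G$ is $s$-semicomplete, \autoref{c:zeta} yields that $G$ contains $k \cdot H_0$ as a strong minor. Unrolling the definition of strong minor, the contraction branch sets of the $k$ copies of $H_0$ partition into $k$ families of pairwise vertex-disjoint strongly-connected subdigraphs of $G$, whose union inside each family is a subdigraph of $G$ that contains $H_0 \in \mathcal{H}$ as a strong minor. This gives the $k$ vertex-disjoint subdigraphs required by \itemref{e:first}.

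Otherwise $\pw(G) < W$. Consider the class $\mathcal{H}'$ of all digraphs that are subdigraph-minimal with respect to containing some member of $\mathcal{H}$ as a strong minor. By \autoref{c:sconn}, every member of $\mathcal{H}'$ has the same number of strongly-connected components as some $H \in \mathcal{H}$, hence at most $p$. Applying \autoref{boundtw-minor} to $G$, $k$ and $\mathcal{H}'$, we either obtain $k$ pairwise vertex-disjoint $\mathcal{H}'$-subdigraphs of $G$ (which are in particular $k$ vertex-disjoint subdigraphs each containing a member of $\mathcal{H}$ as a strong minor, giving \itemref{e:first}), or a vertex set $X \subseteq V(G)$ of size at most $2p^2(k-1)(\pw(G)+1) \leq 2p^2(k-1)W$ hitting every $\mathcal{H}'$-subdigraph of $G$. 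In the latter case, if some subdigraph of $G \setminus X$ contained an $H \in \mathcal{H}$ as a strong minor, it would in turn contain a subdigraph-minimal such subdigraph, hence an $\mathcal{H}'$-subdigraph disjoint from $X$, contradicting the choice of $X$. Therefore setting $f(k) = 2p^2(k-1)\zeta_{s,\, k \cdot H_0}$ (and $f(1) = 0$) realises \itemref{e:second}.

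The only genuine subtlety is the use of \autoref{c:sconn}: the natural class one would like to feed to \autoref{boundtw-minor}, namely all subdigraphs of $G$ containing some $H \in \mathcal{H}$ as a strong minor, has no a priori bound on its number of strongly-connected components, so one must first restrict to subdigraph-minimal witnesses, and then argue, as above, that a hitting set for the restricted class is still a hitting set for the full class. Once this passage is justified, the argument is essentially the standard undirected proof transposed to digraphs via \autoref{c:zeta} and \autoref{boundtw-minor}.
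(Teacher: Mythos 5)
Your proof is correct and follows essentially the same route as the paper's: bound the pathwidth via \autoref{c:zeta} (you do it as an explicit case split, the paper as a contrapositive after assuming \itemref{e:first} fails), pass to the class of subdigraph-minimal witnesses, control their number of strongly-connected components via \autoref{c:sconn}, and conclude with \autoref{boundtw-minor}. The only differences are cosmetic: you make explicit both the ``unrolling'' of a $k\cdot H_0$ strong minor into $k$ vertex-disjoint witnesses and the fact that hitting all minimal witnesses hits all witnesses, two steps the paper leaves implicit, and you fix $H_0\in\mathcal H$ up front, which is slightly cleaner than the paper's deferred ``Let $H\in\mathcal H$.''
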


\begin{proof}
We prove the lemma for $f(k) = 2p^2(k-1) \zeta_{s, k\cdot H}$, where $p$ denotes the maximum number of strongly-connected components of a digraph in $\mathcal{H}$. Let us assume that \itemref{e:first} does not hold (otherwise we are done).
  Let $H \in \mathcal{H}$.
  According to \autoref{c:zeta}, we have $\pw(G) < \zeta_{s, k\cdot
    H}$.
  Let
  $\hat{\mathcal{H}}$ be the class of all subdigraph-minimal digraphs
  containing a digraph of $\mathcal{H}$ as a strong minor. Observe that
  $G$ has a digraph of $\mathcal{H}$ as a strong minor iff it has a
  subgraph isomorphic to a digraph in $\hat{\mathcal{H}}$. Also,
  according to \autoref{c:sconn}, the digraphs in $\hat{\mathcal{H}}$
  have at most $p$ strongly-connected components.
  We can now apply \autoref{boundtw-minor} and obtain a set $X$ of at most
  $2p^2(k-1) \zeta_{s, k\cdot H}$ vertices such that $G \setminus X$ contains no digraph of $\mathcal{H}$ as a
    strong minor, that is, item \itemref{e:second}. This concludes the proof.
\end{proof}

In general, subdigraph-minimal digraphs containing a digraph $H$ as a butterfly minor (resp.\
topological minor) may have more strongly-connected components
than~$H$. Therefore we focus on strongly-connected digraphs where the
following result plays the role of \autoref{l:cc-sc}.
\begin{lemma}
  Let $G$ be a strongly-connected digraph and let $H$ be the digraph obtained by contracting
  a contractible arc $(s,t)$ of $G$. Then $H$ is strongly-connected.
\end{lemma}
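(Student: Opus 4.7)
The plan is to use the natural quotient map from $G$ to $H$ to transport directed walks, and then to invoke the strong-connectivity of $G$.

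First I would set up notation. Let $v_{st}$ be the vertex of $H$ obtained by identifying $s$ with $t$, and define $\pi : V(G) \to V(H)$ by $\pi(s) = \pi(t) = v_{st}$ and $\pi(u) = u$ for every other vertex $u$. Unwinding the definition of arc contraction, every arc $(u,v) \in E(G)$ distinct from $(s,t)$ becomes an arc $(\pi(u), \pi(v))$ of $H$ (multi-arcs being permitted), while the arc $(s,t)$ itself is discarded as it would create a loop. The key observation is then that any directed walk $u_0 u_1 \cdots u_\ell$ in $G$ descends via $\pi$ to a directed walk $\pi(u_0)\pi(u_1)\cdots \pi(u_\ell)$ in $H$; the only delicate case is when the walk traverses the contracted arc $(s,t)$, in which case the two consecutive images are both equal to $v_{st}$ and can simply be collapsed without altering the remaining sequence of arcs.

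With this in hand, concluding strong-connectivity is immediate. I would fix arbitrary $x,y \in V(H)$, choose any preimages $\tilde{x} \in \pi^{-1}(x)$ and $\tilde{y} \in \pi^{-1}(y)$ in $G$, apply the strong-connectivity of $G$ to obtain a directed walk from $\tilde{x}$ to $\tilde{y}$, and project it through $\pi$ to obtain a directed walk from $x$ to $y$ in $H$. Since $x$ and $y$ were arbitrary, $H$ is strongly-connected.

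There is no substantive obstacle in this argument: the statement actually holds for the contraction of \emph{any} arc of a strongly-connected digraph, and the contractibility hypothesis is present only because the lemma is intended to be applied in the context of butterfly minors, where only contractible arcs may be contracted. The only bookkeeping to watch is the handling of the contracted arc and of possible multi-arcs, but neither affects the existence of a directed walk between the required endpoints in $H$.
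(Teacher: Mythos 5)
Your proof is correct and takes a genuinely different route from the paper's. The paper argues by contradiction: it picks $x,y \in V(H)$ supposedly witnessing failure of strong connectivity, takes a directed \emph{path} from $x$ to $y$ in $G$, locates indices $i,j$ with $v_i = s$ and $v_j = t$ on that path, and splits into cases $i<j$ (rewire the path across the merged vertex) and $i>j$ (derive a contradiction from the contractibility hypothesis). Your argument instead constructs the quotient map $\pi$ explicitly, observes that $\pi$ sends directed walks in $G$ to directed walks in $H$ (collapsing any arc that becomes a loop), and lifts the two endpoints to $G$. This is shorter, avoids a case split, and --- as you correctly point out --- does not use contractibility at all, so it proves a strictly more general fact; the hypothesis in the lemma statement is there only because butterfly contractions permit only contractible arcs. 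The one small gain the paper's argument has is that by insisting on \emph{paths} and using contractibility in the $i>j$ case, it implicitly shows that the lift can always be taken to be a simple path rather than a walk; your proof produces a walk, which suffices for strong connectivity but is marginally weaker as a structural statement. One wording nit: you single out the traversal of $(s,t)$ as ``the only delicate case,'' but any other arc joining $s$ and $t$ in either direction would likewise become a loop after contraction and needs the same collapsing; your projection handles this automatically, so the proof is unaffected, but the parenthetical remark slightly undercounts the cases being absorbed.
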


\begin{proof}
  In the case where $H$ is a single vertex, it is strongly-connected and we are done. So we now assume that $H$ has at least two vertices.
  Towards a contradiction, let us assume that there are two vertices
  $x,y \in V(H)$ such that there is a directed path $v_1 \dots v_l$ from $x=v_1$ to $y=v_l$ in
  $G$ but not in $H$. As $G$ and $H$ differ only by the contraction of
  $(s,t)$, there are distinct $i,j \in \intv{1}{l}$ such that $s =
  v_i$ and $t = v_j$. If $i<j$, then $v_1 \dots v_{i-1}v_j\dots v_l$ is a
  directed path from $x$ to $y$ in $H$, a contradiction. Let us now
  assume that $i>j$. In order to handle the case where $i=l$ or $j=0$, we observe that since $G$ is strongly-connected,
  there are arcs $(v_0,v_1)$ and $(v_l, v_{l+1})$ (for some vertices
  $v_0,v_{l+1}$ that may belong to the path we consider).
  Now, $v_i$ is the tail of the two arcs $(v_i,v_j)$ and $(v_i,
  v_{i+1})$ and $v_j$ is the head of the two arcs $(v_i,v_j)$ and
  $(v_{j-1}, v_j)$, which contradicts the contractibility
  of~$(s,t)$. Therefore, $H$ is strongly-connected.
\end{proof}

\begin{corollary}\label{c:scbutterfly}
  Let $H$ be a digraph whose connected components are
  strongly-connected and let $G$ be a subdigraph-minimal digraph
  containing $H$ as a butterfly minor (resp.\ topological minor). Then $H$ and $G$ have the same number of
  strongly-connected components.
\end{corollary}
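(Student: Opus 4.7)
The plan is to reduce the claim to the case where $H$ is strongly-connected and then prove, in that reduced case, that $G$ itself is strongly-connected; from this the count of strongly-connected components of $G$ will match that of $H$.

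For the reduction, write $H = H_1 \sqcup \cdots \sqcup H_p$ where the $H_i$ are the (strongly-)connected components of $H$. In a branch-set representation of $H$ as a butterfly minor of $G$, the sets $V_i = \bigcup_{v \in V(H_i)} B_v$ partition $V(G)$; moreover there is no arc of $G$ between $V_i$ and $V_j$ for $i \neq j$, since any such arc would descend to an arc of $H$ between $H_i$ and $H_j$, contradicting that these lie in distinct connected components. Hence $G = G[V_1] \sqcup \cdots \sqcup G[V_p]$, each $G[V_i]$ contains $H_i$ as a butterfly minor, and subdigraph-minimality of $G$ forces the analogous minimality of each $G[V_i]$ for $H_i$. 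It therefore suffices to prove: if $H$ is strongly-connected and $G$ is subdigraph-minimal containing $H$ as a butterfly minor, then $G$ is strongly-connected.

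I would establish this by induction on $|V(G)|$. The base case $G \cong H$ is immediate. Otherwise, fix a contraction sequence $G \to G_1 \to \cdots \to H$ and let $(s,t)$ be the arc contracted at the first step. A short lifting argument shows that $G_1$ is again subdigraph-minimal for $H$: any proper subdigraph $G_1' \subsetneq G_1$ lifts to a proper subdigraph $G' \subsetneq G$ (reinserting $s$, $t$, and the arc $(s,t)$ whenever the merged vertex $w$ is kept in $G_1'$, dropping them otherwise) in which $(s,t)$ remains contractible, hence $G'$ still contains $H$ as a butterfly minor, contradicting the minimality of~$G$. By the induction hypothesis, $G_1$ is strongly-connected. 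By symmetry, assume $(s,t)$ is the only in-arc of $t$ in $G$; then every in-arc of $w$ in $G_1$ lifts to an in-arc of $s$ in $G$. Minimality of $G$ also forces $t$ to have at least one out-arc in $G$, since otherwise $G - t$ is isomorphic to $G_1$ and already contains $H$ as a butterfly minor. Choosing such an out-arc $(t,u)$, a closed walk $w \to u \to \cdots \to w$ in the strongly-connected $G_1$ lifts to a walk from $t$ to $s$ in $G$, and all remaining strong-connectivity checks reduce to routing through $s$ together with the arc $(s,t)$.

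The topological minor version follows by the same scheme, but is actually simpler: by subdigraph-minimality $G$ is itself a subdivision of $H$, and any subdivision of a strongly-connected digraph is strongly-connected. The main obstacle is the inductive step described above, namely producing the path from $t$ to $s$: this requires combining the minimality of $G$ (to guarantee an out-arc of $t$) with the butterfly condition on $(s,t)$ (to pin down where the in-arcs of $w$ come from in $G$). Once both ingredients are in place the rest of the verification is routine bookkeeping.
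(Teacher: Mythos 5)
Your proof is correct and, notably, more complete than what the paper actually supplies: the paper states this corollary with no proof, leaving the reader to infer it from the preceding (unnumbered) lemma, which only shows the \emph{forward} direction — contracting a contractible arc of a strongly-connected digraph yields a strongly-connected digraph. That forward statement does not by itself yield the corollary, because after decomposing into components one still needs to rule out contractions that merge distinct strongly-connected components of $G$, and that requires the minimality hypothesis. Your argument supplies exactly this missing piece: the lifting lemma showing that $G_1 = G/(s,t)$ inherits subdigraph-minimality, the observation that minimality forces $t$ to retain an out-arc, and the lifting of paths of $G_1$ through $w$ back to walks in $G$ via $s$, $t$, and $(s,t)$. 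Your reduction to the connected case via branch sets and your treatment of the topological-minor case (where minimality makes $G$ an actual subdivision of $H$, hence trivially strongly connected) are both fine. Two small points worth tightening in a final write-up: when you pick the out-arc $(t,u)$, the case $u=s$ (i.e.\ a $2$-cycle on $s,t$) should be noted separately, since the corresponding loop at $w$ is deleted in $G_1$ — but then $(t,s)$ already gives the needed path directly; and the decomposition step implicitly uses that minimality forces every vertex of $G$ to lie in some branch set and every arc of $G$ to be either contracted or to realize an arc of $H$, which is worth stating. Overall this is a valid and more detailed route than the paper's implicit one.
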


\begin{lemma}\label{l:bm}
  For every finite family $\mathcal{H}$ of digraphs whose connected
  components are strongly-connected and every $s\in
  \N$, there is a function $f \colon \N \to \N$ such that, for every
  $s$-semicomplete digraph $G$ and every positive integer $k$, one of
  the following holds:
  \begin{inparaenum}[(a)]
  \item $G$ contains $k$ vertex-disjoint subdigraphs, each having a
    digraph of $\mathcal{H}$ as a butterfly minor (resp.\ topological minor); or\label{e2:first}
  \item there is a set $X \subseteq V(G)$ with $|X|\leq f(k)$ such
    that $G \setminus X$ contains no digraph of $\mathcal{H}$ as a
    butterfly minor (resp.\ topological minor).\label{e2:second}
  \end{inparaenum}
\end{lemma}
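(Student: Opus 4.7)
The plan is to mirror the proof of \autoref{l:sm} almost verbatim, substituting \autoref{c:scbutterfly} for \autoref{c:sconn}. Fix an arbitrary $H \in \mathcal{H}$ and let $p$ denote the maximum number of strongly-connected components over all digraphs of $\mathcal{H}$; this is finite because $\mathcal{H}$ is. I would prove the lemma for $f(k) = 2p^2(k-1)\,\zeta_{s,\,k\cdot H}$, assuming that \itemref{e2:first} fails and aiming to establish \itemref{e2:second}.

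From the failure of \itemref{e2:first} it follows in particular that $G$ does not contain $k \cdot H$ as a butterfly (resp.\ topological) minor, so by \autoref{c:zeta} we have $\pw(G) < \zeta_{s,\,k\cdot H}$. Next, I would introduce the class $\hat{\mathcal{H}}$ of all subdigraph-minimal digraphs that contain some member of $\mathcal{H}$ as a butterfly (resp.\ topological) minor. By minimality, $G$ contains a member of $\mathcal{H}$ as such a minor if and only if $G$ has a subdigraph isomorphic to a member of $\hat{\mathcal{H}}$. Since every member of $\mathcal{H}$ has all its connected components strongly-connected, \autoref{c:scbutterfly} applies and guarantees that each element of $\hat{\mathcal{H}}$ has at most $p$ strongly-connected components.

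Now I would apply \autoref{boundtw-minor} to $G$ with the class $\hat{\mathcal{H}}$. Since \itemref{e2:first} fails by assumption, the packing alternative of \autoref{boundtw-minor} cannot hold either, so there is a set $X \subseteq V(G)$ of size at most $2p^2(k-1)(\pw(G)+1) \leq 2p^2(k-1)\,\zeta_{s,\,k\cdot H} = f(k)$ such that $G \setminus X$ contains no $\hat{\mathcal{H}}$-subdigraph, equivalently no member of $\mathcal{H}$ as a butterfly (resp.\ topological) minor. This yields \itemref{e2:second}. There is no real obstacle here beyond assembling these ingredients; the only point to check is that the bounded-components hypothesis of \autoref{boundtw-minor} holds for $\hat{\mathcal{H}}$, and this is supplied exactly by \autoref{c:scbutterfly} applied to the structural assumption on $\mathcal{H}$.
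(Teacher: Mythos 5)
Your proof is correct and follows essentially the same route as the paper: exclude \itemref{e2:first}, bound $\pw(G)$ via \autoref{c:zeta}, pass to the subdigraph-minimal class $\hat{\mathcal{H}}$, invoke \autoref{c:scbutterfly} to bound the number of strongly-connected components, and finish with \autoref{boundtw-minor}. (Your $p$ is stated as the maximum number of strongly-connected components of members of $\mathcal{H}$ while the paper uses connected components, but under the hypothesis that all connected components are strongly-connected these coincide, so there is no discrepancy.)
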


\begin{proof}
We prove the lemma for $f(k) = 2p^2(k-1) \zeta_{s, k\cdot H}$.
  This proof is similar to that of \autoref{l:sm}.
  Again, we can assume that \itemref{e2:first} does not hold and
  deduce $\pw(G) < \zeta_{s, k\cdot
    H}$ from \autoref{c:zeta}, for some $H \in \mathcal{H}$. We denote by $p$ the maximum  number of
  connected components of a digraph in $\mathcal{H}$ and by
  $\hat{\mathcal{H}}$
 the class of all subdigraph-minimal digraphs
  containing a digraph of $\mathcal{H}$ as a butterfly minor (resp.\
  topological minor). The digraphs in $\hat{\mathcal{H}}$
  have at most $p$ strongly-connected components, according to \autoref{c:scbutterfly}.
  We now apply \autoref{boundtw-minor} and obtain a set $X$ of at most
  $2p^2(k-1) \zeta_{s, k\cdot H}$ vertices satisfying item~\itemref{e2:second}.
\end{proof}

The part of \autoref{main:pw} related to strong minors is a
consequence of \autoref{l:sm} and \autoref{c:sconn}.
The part related to butterfly minors and topological minors follows
from \autoref{l:bm}.

\section{Hitting immersions}

This section is devoted to the proof of \autoref{main:ctw}.
For every two
subsets $X,Y \subseteq V(G)$, we denote by $E_G(X,Y)$ the set of arcs
of $G$ of the form $(x,y)$ with $x \in X$ and $y \in Y$.
Recall that an \emph{$\mathcal{H}$-subdigraph} of a digraph $G$ is any subdigraph of
$G$ that is isomorphic to some digraph in~$\mathcal{H}$.
The parameter that plays a major role in this section is cutwidth.
If $G$ is a digraph on $n$ vertices, the \emph{width} of an
ordering $v_1 \dots, v_n$ of its vertices is defined as
\[
\max_{i \in \intv{2}{n}} \left |E_G(\{v_1, \dots, v_{i-1}\}
  , \{v_i, \dots, v_n\})\right |.
\]
The \emph{cutwidth} of $G$, that we write
$\ctw(G)$, is the minimum width over all orderings~$V(G)$.
Intuitively, a digraph that has small cutwidth has an ordering where the
number of "left-to-right" arcs is small.
The following result plays a similar role as \autoref{omega}
in the previous section.
\begin{theorem}[\protect{\cite[(1.2)]{Chudnovsky2012tourn}}]\label{th:chu}
  For every digraph $H$, there is a positive integer $\eta_H$ such
  that every semicomplete digraph $G$ that has cutwidth at least $\eta_H$
  contains $H$ as an immersion.
\end{theorem}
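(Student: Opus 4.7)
The plan is to show that sufficiently large cutwidth in a semicomplete digraph forces the appearance of a richly arc-connected substructure, mirroring the role played by pathwidth and \autoref{omega} in the previous section, and then to check that such a substructure automatically contains $H$ as an immersion. The semicomplete hypothesis is essential and used in two places: it guarantees that between any two disjoint vertex sets at least one arc exists in some direction, and it prevents the trivial sparse counterexamples in which large cutwidth coexists with very few internally arc-disjoint paths.

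First I would relate cutwidth to a local substructure. Since $\ctw(G)$ is the minimum over orderings of the maximum forward cut, the hypothesis $\ctw(G) \geq \eta_H$ means that in every ordering of $V(G)$ some cut is crossed by at least $\eta_H$ forward arcs. I would combine this with a carefully chosen ordering of $V(G)$ (for instance one obtained greedily from degree information, or built from a topological ordering of the condensation) and aim to extract a sequence $v_1, \dots, v_k$ of vertices, where $k = k(H)$ is sufficiently large, such that for every $i < j$ there are many arc-disjoint directed paths from $v_i$ to $v_j$. A reasonable intermediate milestone is to first obtain many arcs $(v_i, v_j)$ with $i < j$ using semicompleteness, then upgrade these arcs to arc-disjoint paths via Menger-type arguments that are especially strong on semicomplete digraphs.

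Given such a ``jungle'' on $k$ vertices, the immersion of $H$ would follow by a greedy construction: fix an arbitrary injection from $V(H)$ into $\{v_1, \dots, v_k\}$, enumerate the arcs of $H$, and for each arc route it along a fresh path from the arc-disjoint family available between the corresponding branch vertices. Provided the supply of arc-disjoint paths between each pair of branch vertices exceeds $|E(H)|$, this greedy routing never fails, yielding the desired immersion of $H$.

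The main obstacle is the first step: extracting the richly connected substructure from the sole hypothesis of large cutwidth. This is the combinatorial core of the statement and cannot be avoided by a purely counting argument, because the bound must hold uniformly over \emph{all} orderings. Without semicompleteness the implication fails, so the argument must genuinely exploit the hypothesis, probably via a dichotomy asserting that either a good ordering exists (contradicting large cutwidth) or a jungle-like structure can be read off from any attempted ordering. A potential-function or iterative-refinement argument on orderings of semicomplete digraphs, combined with a Menger-style linkage theorem specific to this class, is the flavor of tool I would anticipate needing here.
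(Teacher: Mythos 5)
The paper does not prove this theorem: it is imported verbatim from Chudnovsky, Fradkin, and Seymour (their result (1.2) on tournament immersion and cutwidth) and used as a black box, alongside \autoref{omega}, to supply the large-width half of the Erd\H{o}s--P\'osa dichotomy in \autoref{main:ctw}. There is therefore no in-paper proof to compare your attempt against; the relevant argument lives entirely in the cited reference.

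Judged on its own terms, your outline points roughly in the direction of the cited proof, which does deduce from high cutwidth the existence of a highly linked substructure of a semicomplete digraph through which the arcs of $H$ can be routed. But you explicitly concede in your last paragraph that the combinatorial core --- extracting such a substructure from the sole hypothesis of large cutwidth --- is missing, and that concession is accurate: without it there is no proof. Concretely, you state no intermediate lemma with a quantitative conclusion, give no bound on $\eta_H$ in terms of $|V(H)|$ and $|E(H)|$, and offer no argument for why semicompleteness lets you upgrade single arcs $(v_i,v_j)$ to families of more than $|E(H)|$ pairwise arc-disjoint $v_i$--$v_j$ paths. An appeal to ``Menger-type arguments'' does not bridge this on its own: Menger equates local arc-connectivity with minimum cut size, but one still has to show that these local arc-connectivities are simultaneously large across many pairs of vertices, and that is precisely where the cutwidth hypothesis must be spent. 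As written, the proposal is a plausible research plan, not a proof; the gap you name is the whole theorem.
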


\begin{proof}[Proof of \autoref{main:ctw}.]
Let ${\mathcal{H}}$ be the class of all
  subdigraph-minimal digraphs containing $H$ as
  an immersion and observe that these digraphs are strongly-connected.
Again, $G$ has contains $H$ as an immersion iff it has an
${\mathcal{H}}$-subdigraph.

According to \autoref{th:chu}, we are done if $\ctw(G) \geq \eta_{k\cdot
  H}$. Therefore we now consider digraphs of cutwidth at most $\eta_{k\cdot
  H}$.

We will prove the statement on digraphs of cutwidth at most $t$ by induction
on~$k$ with $f\colon k \mapsto k\cdot t$.
The case $k=0$ is trivial, therefore we assume $k>0$ and that the
result holds for every~$k'<k$. We also assume that $G$ does not
contain $k$ arc-disjoint ${\mathcal{H}}$-subdigraphs, otherwise we
are done.
Let $v_1, \dots, v_n$ be an ordering of the vertices of $G$ of minimum width.
Let $i\in \N$ be the minimum integer such that
$G[v_1, \dots, v_i]$ has an ${\mathcal{H}}$-subdigraph, that we call~$J$.
Notice that $i>1$ as we assume that $H$ has at
least two vertices.
We set $Y = E_G(\{v_1, \dots, v_{i-1}\}, \{v_{i}, \dots, v_n\})$.
Notice that $|Y|\leq t$.
As
the digraphs in ${\mathcal{H}}$ are strongly-connected, any
${\mathcal{H}}$-subdigraph of $G \setminus Y$
 belongs to exactly one of $\{v_1, \dots, v_{i-1}\},
\{v_{i}, \dots, v_n\}$. By definition of $i$, every such subdigraph
belongs to $\{v_{i}, \dots, v_n\}$. Notice that every subdigraph of
$\{v_{i}, \dots, v_n\}$ is arc-disjoint with $J$.
Therefore $G[\{v_{i}, \dots, v_n\}]$ does not contains $k-1$
arc-disjoint ${\mathcal{H}}$-subdigraphs. It is clear that this
subdigraph has cutwidth at most~$t$. By induction hypothesis, there is
a set $Y' \subseteq E(G[\{v_{i+1}, \dots, v_n\}])$ such that
$G[\{v_{i+1}, \dots, v_n\}] \setminus Y'$ has no
${\mathcal{H}}$-subdigraph
and $|Y'| \leq (k-1)\cdot t$. We deduce that $G \setminus (Y \cup
Y')$ has no ${\mathcal{H}}$-subdigraph and that $|Y \cup Y'| \leq
k\cdot t$, as required. This concludes the induction.
We saw above that we only need to consider digraphs of cutwidth at
most $\eta_{k\cdot
  H}$ and we just proved that in this case there is a suitable set of
arcs of size at most~$k\cdot \eta_{k\cdot
  H}$. This concludes the proof. Observe that \autoref{main:ctw} also
holds when considering a finite family of strongly-connected graphs $\mathcal{F}$ instead of $H$. For this $\mathcal{H}$
should be defined as the subdigraph-minimal digraphs containing a
digraph of $\mathcal{F}$ as an immersion, and $H$ as any digraph of
$\mathcal{F}$. The proof then follows the exact same lines.
\end{proof}

\section{Discussion}
\label{disc}

In this note we obtained new Erdős--Pósa type
results about classes defined by the relations of strong minors,
butterfly minors, topological minors and immersions.
The restriction of the host class to tournaments (or slightly larger
classes) allowed us to obtain results for every strongly-connected
pattern $H$.
In particular, we provided conditions on the host class where \autoref{kreutzer} holds for every
strongly-connected digraph~$H$, which is not the case in general.
Our proofs support the claim that techniques analogue to those used in
the undirected case may be adapted to the 
directed setting.
Let us now highlight two directions for future research.

\subsection*{Optimization of the gap.}
The bounds on the function $f$ in our results (gap of the Erdős--Pósa
property) depend on the exclusion bounds of \autoref{omega} and
\autoref{th:chu}. Therefore, any improvement of these bounds yields an
improvement of~$f$.
The upper bound on $\eta_H$ of \autoref{th:chu} that can be obtained
from the proof of \cite{Chudnovsky2012tourn} is $72\cdot 2^{2h(h+2)} + 8\cdot 2^{h(h+2)}$, where $h = |V(H)| +
2|E(H)|$. As a consequence, we have $f(k) = 2^{O(k^2h^2)}$ in
\autoref{main:ctw}. It would be interesting to know whether a gap that
is polynomial in $k$ can be obtained. The same question can be asked
for \autoref{main:pw}, however the upper bound in \autoref{omega} that we can compute from the proof of \cite{Fradkin2013tourn} is large (triply exponential).

\subsection*{Generalization.}
The results presented in this note were related to (generalizations
of) semicomplete digraphs. One direction for future research would be
to extend them to wider classes of hosts. On the other hand, in
\autoref{main:pw}, we require the guest digraph to be strongly
connected when dealing with butterfly  and topological minors. It is natural to ask if we can drop this condition. This
would require a different proof as ours draws upon this condition.


\end{document}